\newcommand{\R}{\mathbb{R}}
\newcommand{\vol}{\mathrm{vol}}
\newcommand{\ball}{B}
\newcommand{\union}{\mathrm{union}}
\newcommand{\overlap}{\mathrm{overlap}}
\newcommand{\distoverlap}{\mathrm{dist\_overlap}}
\newcommand{\voloverlap}{\mathrm{vol\_overlap}}
\newcommand{\multiplicity}{\mathrm{multiplicity}}
\newcommand{\density}{\mathrm{density}}
\newcommand{\lattice}{\mathcal{L}}
\newcommand{\voronoi}{V}
\newcommand{\threshold}{\omega}
\newcommand{\packingquality}{\mathrm{Qual}_{\mathrm{packing}}}
\newcommand{\coveringquality}{\mathrm{Qual}_{\mathrm{covering}}}
\newcommand{\freespace}{\mathrm{free\_space}}
\newcommand{\ignore}[1]{}
\providecommand{\norm}[1]{\lVert#1\rVert}
\newtheorem{theorem}{Theorem}
\newtheorem{lemma}[theorem]{Lemma}
\newtheorem{proposition}[theorem]{Proposition}
\title{Sphere Packing with Limited Overlap}
\author[1]{Mabel Iglesias-Ham\thanks{mabel.iglesias-ham@ist.ac.at}}
\author[2]{Michael Kerber\thanks{mkerber@mpi-inf.mpg.de}}
\author[1]{Caroline Uhler\thanks{caroline.uhler@ist.ac.at}}
\affil[1]{Institute of Science and Technology Austria, Klosterneuburg}
\affil[2]{Max-Planck-Institut f\"ur Informatik, Saarbr\"ucken}
\date{\vspace{-5ex}}
\begin{document}
\maketitle

\begin{abstract}
The classical sphere packing problem asks for the best (infinite) arrangement of non-overlapping unit balls which cover as much space as possible.
We define a generalized version of the problem, where we allow each ball a limited amount of overlap with other balls.
We study two natural choices of overlap measures
and obtain the optimal lattice packings in a parameterized family of lattices which contains
the FCC, BCC, and integer lattice.
\end{abstract}

\section{Introduction}
\label{intro}

Sphere packing and sphere covering problems have been a popular area of study in discrete mathematics over many years. A sphere packing usually refers to the arrangement of non-overlapping \mbox{n-d}imensional spheres. A typical sphere packing problem is to find a maximal density arrangement, i.e.,~an arrangement in which the spheres fill as much of the space as possible. On the other hand, sphere covering refers to an arrangement of spheres that cover the whole space. Overlap is not only allowed in these arrangements, but inevitable. In this case, the aim is to find an arrangement that minimizes the density (i.e.,~the total volume of the spheres divided by the volume of the space). 

In dimension 2, the densest circle packing and the thinnest circle covering are both attained by the hexagonal lattice~\cite{hexPacking1939}. In dimension 3, Hales~\cite{Hales2005} has recently given a computer-assisted proof showing that the face-centered cubic (FCC) lattice achieves the densest packing even when the sphere centers are not constrained to lie on a lattice. The thinnest covering in dimension 3 is achieved by the body-centered cubic (BCC) lattice~\cite{Bambah1954}, but it is not known yet if one can improve the covering by allowing non-lattice arrangements. In dimension 4 and higher, the situation is more complicated and even less is known; 
see~\cite{cs-sphere} for a comprehensive summary.

Although sphere packing and sphere covering problems have attracted a lot of attention by mathematicians, the arrangements of spheres encountered for example in modeling in the biological sciences usually fall between sphere packing and sphere covering: Models consist of overlapping spheres, which do not fill the whole space, and one is often interested in maximal density configurations of spheres where we allow a certain amount of overlap. Examples are the spatial organization of chromosomes in the cell nucleus~\cite{chromosome_territories, ellipsoid_packing}, the spatial organization of neurons~\cite{neuron_packing_1, neuron_packing_2}, or the arrangement of ganglion cell receptive fields on the retinal surface~\cite{rfield_packing_1, rfield_packing_2}. The wide applicability is also based on the fact that soft spheres can be modeled as hard spheres with limited overlap. In all these applications one would like to understand the optimal packing configuration of spheres when allowing a certain amount of overlap. 

In this paper, we study this problem between sphere packing and sphere covering for the special case when the sphere centers lie on a particular family of lattices obtained by diagonally distorting the integer grid: Let $\delta>0$ be a distortion parameter. Then the lattice $\mathcal{L}_{\delta}$ is defined by mapping each unit vector $e_i\in\mathbb{R}^n$, $i=1,\dots ,n$, to 
\begin{equation}
e_i^{\delta} \;:=\; e_i +\frac{\delta -1}{n}\mathbf{1}.
\label{dist_family}
\end{equation}
This family of lattices has been defined and studied in \cite{ek-covering}. It is particularly interesting, since it contains the optimal packing lattices in dimensions 2 and 3 and the optimal covering lattices in dimensions 2-5. At the same time, it is simple enough (defined by one parameter only) allowing us to give a complete analysis of the density of sphere arrangements with limited overlap as a function of $\delta$. In this paper, we prove that for dimension 2 and 3, the optimal packing and covering lattices are robust: even when allowing a certain overlap, either the optimal sphere packing lattice or the optimal sphere covering lattice attain the maximum density, depending on the amount of allowed overlap and how overlap is measured.  

Our paper is organized as follows:
In Section \ref{measures} we discuss two different measures of overlaps in sphere arrangements. The first one, called \emph{distance-based overlap}, is simply a linear function of the distance between two sphere centers 
and has been used for the analysis in \cite{ellipsoid_packing}. The second one, called \emph{volume-based overlap} is based on the intersection volume of spheres. 
In Section \ref{dist_overlap} we give a complete description of the density of sphere arrangements with limited overlap for the distance-based overlap. In particular, we show that the FCC lattice results in the densest arrangement in the considered family of lattices, regardless of the amount of allowed overlap.
In Section \ref{vol_overlap} we analyze the more complicated volume-based overlap measure: We derive an exact formula for the packing density for each lattice in the family and each overlap threshold
by analyzing the Voronoi polytope of a lattice point. For planar lattices, we prove that the hexagonal lattice remains optimal for any overlap. In dimension 3, we show that the best choice depends
on the allowed overlap and we provide numerical evidence that the optimal lattice is always either the FCC or the BCC lattice.
We end with a short discussion in Section \ref{discussion}.

\section{Measures of Sphere Arrangements}
\label{measures}

We let $\ball_r(p)$ denote the closed ball of radius $r$ and center $p$.
We fix a lattice $\lattice$ and let $\voronoi$ denote the Voronoi cell of the origin consisting of all points that are closer to the origin than to any other lattice point. Observe that the Voronoi cells
of other lattice points are just translations of $\voronoi$ and that the Voronoi cells tessellate $\mathbb{R}^n$. 

A first measure of a sphere arrangement is the \emph{density}. It is defined as the number of spheres that contain an average point and can be rephrased as
\begin{equation}
\density_{\lattice}(r)\;:=\; \frac{\vol\, B_r(0)}{\vol\,\voronoi}.
\label{density}
\end{equation}

Second, we define the \emph{union} of a sphere arrangement to be
\begin{equation}
\union_{\lattice}(r)\;:=\;\frac{\vol \left(B_r(0)\cap \voronoi\right)}{\vol\,\voronoi}.
\label{union}
\end{equation}

The $\union$ denotes what fraction of the Voronoi cell is covered
by the ball of radius $r$. Looking at the whole space, it also
denotes what fraction of $\R^n$ is covered by the union of all balls
of radius $r$. This follows because the Voronoi cells tessellate
$\R^n$ and from the following statement:
\begin{proposition}
Let $p$ be a point that belongs to the Voronoi cell of $c_1$. 
If $p$ is covered by a ball $B_r(c_2)$, then $p$ is also covered by $B_r(c_1)$.
\end{proposition}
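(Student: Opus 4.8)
The plan is to reduce the entire statement to the defining inequality of a Voronoi cell, so no geometry beyond comparing distances is needed. First I would recall that $p$ lying in the Voronoi cell $\voronoi$ of $c_1$ means precisely that $c_1$ is a nearest lattice point to $p$: by definition $\norm{p-c_1}\le\norm{p-c'}$ for every lattice point $c'$. Since every ball in the arrangement is centered at a lattice point, the center $c_2$ is itself a lattice point, so this inequality may be applied with $c'=c_2$, yielding $\norm{p-c_1}\le\norm{p-c_2}$.

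Next I would unpack the hypothesis that $p$ is covered by $B_r(c_2)$, which is simply the statement $\norm{p-c_2}\le r$. Chaining the two inequalities gives
\begin{equation*}
\norm{p-c_1}\;\le\;\norm{p-c_2}\;\le\; r,
\end{equation*}
and $\norm{p-c_1}\le r$ is exactly the assertion that $p\in B_r(c_1)$. This completes the argument.

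I expect no genuine obstacle here: the claim is an immediate consequence of the nearest-point characterization of Voronoi cells, and it requires only a comparison of distances rather than the triangle inequality or any volume computation. The one point worth stating carefully is that $c_2$ must be one of the lattice points generating the arrangement, since that is what makes the Voronoi inequality available; this is guaranteed by the setup, in which the union is taken over balls centered at the lattice points. I would also remark on why this is the right lemma to isolate: it shows that every point of $\voronoi$ that is covered by \emph{any} ball of the arrangement is already covered by $B_r(c_1)$, so the covered fraction of each Voronoi cell is captured by the local intersection $B_r(0)\cap\voronoi$. Summing over the tessellating translates then identifies $\union_{\lattice}(r)$ with the fraction of $\R^n$ covered by the union of all balls, which is the use to which the proposition is put.
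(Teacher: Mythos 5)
Your proof is correct, and it is precisely the argument the paper relies on: the paper states this proposition without any proof at all, treating it as an immediate consequence of the nearest-point characterization of Voronoi cells, which is exactly the two-line distance comparison $\norm{p-c_1}\le\norm{p-c_2}\le r$ that you give. The only detail worth noting is that the paper defines the Voronoi cell with strict inequality (``closer than''), so for your chain of inequalities (and for the tessellation claim the proposition supports) one should read the cells as closed, i.e.\ defined by $\norm{p-c_1}\le\norm{p-c'}$, as you do.
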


A third measure of a sphere arrangement is the \emph{overlap}. We define two measures of overlap.  The \emph{distance-based overlap} was used to model the spatial organization of chromosomes in \cite{ellipsoid_packing} and is defined as the diameter of the largest sphere that can be inscribed into the intersection of two spheres, i.e.:
\begin{equation}
\distoverlap_{\lattice}(r)\;:=\; \max\left(\frac{2r - \min_{\ell\in\mathcal{L}\setminus\{0\}}(\norm{\ell})}{2r},0\right).
\label{distoverlap}
\end{equation}
A less simplified measure of overlap is the \emph{volume-based overlap}, which we define as the fraction of a sphere that expands outside its Voronoi cell:
\begin{equation}
\voloverlap_{\lattice}(r):= \frac{\vol\, B_r(0) - \vol (B_r(0)\cap \vol\,\voronoi)}{\vol\,\voronoi}.
\label{voloverlap}
\end{equation}
This value is equivalent to the fraction of all other spheres expanding into a Voronoi cell (i.e.,~the overlap with multiplicity inside a Voronoi cell).

We observe that $\density_{\lattice}(\cdot)$, $\union_{\lattice}(\cdot)$ and $\overlap_{\lattice}(\cdot)$ (describing both overlap measures) are non-negative, monotonously increasing functions with $\union_{\lattice}(\cdot)$ upper bounded by~1. The upper bound for $\union$ is reached exactly at the covering radius, the maximal distance of the origin to the boundary of $\voronoi$. The lower bound for $\overlap$ is reached exactly at the packing radius, the minimal distance of the origin to the boundary of $\voronoi$. Also, it holds that
\begin{eqnarray}
\label{density-overlap-union}
\voloverlap_{\lattice}(r) \;=\; \density_{\lattice}(r) - \union_{\lattice}(r).
\end{eqnarray}

\ignore{
Finally, a forth way of quantifying a sphere arrangement is given by its \emph{multiplicity}, which we define as:
$$\multiplicity_{\lattice}(r)\;:=\; \frac{\vol\, B_r(0) }{\vol (B_r(0)\cap \vol\,\voronoi)}.$$
Note that a $\multiplicity$ larger than 1 corresponds to the presence of overlap. Notice also that the multiplicity of a sphere arrangement is related to density and union by the following equation:
$$\density_{\lattice}(r)\; :=\; \union_{\lattice}(r)\cdot\multiplicity_{\lattice}(r),$$
because the intersection of $\voronoi$ and $B_r(0)$ cancels out. Moreover, if we constrain the union to be 1,  the intersection in the denominator becomes $\vol\, \voronoi$, and we have the density. If we constrain the multiplicity to be 1 (meaning that only non-overlapping configurations are allowed), the intersection simplifies to $\vol\, B_r(0)$, and we have the density as well. So using the multiplicity it is possible to ``interpolate'' between the classical packing and covering problems. 
}

\vspace{0.3cm}

Building upon these measures of sphere arrangements we can now define a relaxed packing and covering quality when allowing overlap and uncovered space, respectively. By fixing a threshold $\threshold\in \mathbb{R}_{\geq 0}$, we define the 
\emph{relaxed packing quality} of a lattice as
$$\packingquality(\lattice, \omega)\;:=\;\max_{r\geq 0} \left\{ \density_{\lattice}(r)\mid\overlap_{\lattice}(r)\leq\threshold\right\}.$$

The goal is to find the lattice that maximizes $\packingquality$. Note that for $\threshold=0$, this is equivalent to the classical sphere
packing problem: We want to cover as much space as possible by balls without overlap. It is known that in dimension 3 the FCC lattice is the optimal solution to this problem.

\begin{lemma}
\label{lemma_packing}
The FCC lattice is not optimal with respect to $\packingquality$ for all values of $\threshold$ when measuring overlap by $\voloverlap$.
\end{lemma}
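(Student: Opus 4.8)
The plan is to show that, within the same family $\lattice_\delta$, the BCC lattice strictly outperforms the FCC lattice on an explicit interval of overlap thresholds. I first reduce the maximization defining $\packingquality$ to a statement about the union fraction. Since $\density_{\lattice}$ and $\voloverlap_{\lattice}$ are continuous and monotonically increasing in $r$ (and $\voloverlap_{\lattice}$ vanishes up to the packing radius and then grows), for every $\threshold\ge 0$ there is a radius $r^\ast$ with $\voloverlap_{\lattice}(r^\ast)=\threshold$, and because $\density_{\lattice}$ is increasing the maximum in $\packingquality(\lattice,\threshold)$ is attained at this $r^\ast$. By \eqref{density-overlap-union} the density there equals $\threshold+\union_{\lattice}(r^\ast)$, so $\packingquality(\lattice,\threshold)=\threshold+\union_{\lattice}(r^\ast)$. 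Comparing two lattices at a common $\threshold$ therefore reduces to comparing the union fractions they realize at overlap $\threshold$.

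Next I pin down the thresholds at which each lattice first becomes a full covering. Writing $R_{\mathrm{cov}}$ for the covering radius of a lattice and using $\union_{\lattice}(R_{\mathrm{cov}})=1$, set $\threshold_{\mathrm{cov}}:=\voloverlap_{\lattice}(R_{\mathrm{cov}})=\density_{\lattice}(R_{\mathrm{cov}})-1$, i.e. the covering density minus one. Let $\threshold_B$ and $\threshold_F$ denote these covering overlaps for the BCC and FCC lattices. Since the BCC lattice is the (unique) thinnest lattice covering of $\R^3$~\cite{Bambah1954}, its covering density is strictly smaller than that of the FCC lattice, hence $\threshold_B<\threshold_F$. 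Now fix any $\threshold\in[\threshold_B,\threshold_F)$. For the BCC lattice, $\threshold\ge\threshold_B$ forces $r^\ast\ge R_{\mathrm{cov}}$, so $\union_{\mathrm{BCC}}(r^\ast)=1$ and $\packingquality(\mathrm{BCC},\threshold)=1+\threshold$. For the FCC lattice, $\threshold<\threshold_F$ gives $r^\ast< R_{\mathrm{cov}}$, so $\union_{\mathrm{FCC}}(r^\ast)<1$ and $\packingquality(\mathrm{FCC},\threshold)=\threshold+\union_{\mathrm{FCC}}(r^\ast)<1+\threshold$. Thus $\packingquality(\mathrm{BCC},\threshold)>\packingquality(\mathrm{FCC},\threshold)$ for every $\threshold\in[\threshold_B,\threshold_F)$, so FCC is not optimal for these thresholds, which is exactly the assertion of the lemma.

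The substantive input, and the only real obstacle, is the strict inequality $\threshold_B<\threshold_F$, equivalently that the BCC covering density is strictly below the FCC covering density; this is supplied by Bambah's theorem that BCC is the optimal (thinnest) lattice covering in dimension $3$. The remaining points are routine consequences of the structural facts already recorded: that $\union_{\lattice}$ attains its upper bound $1$ exactly at the covering radius guarantees the strict inequality $\union_{\mathrm{FCC}}(r^\ast)<1$ for $r^\ast$ below that radius, and the monotonicity of $\voloverlap_{\lattice}$ guarantees the existence and correct placement of $r^\ast$. As a sanity check one may substitute the known covering densities (giving $\threshold_B\approx 0.46$) to confirm that the interval $[\threshold_B,\threshold_F)$ is nonempty. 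Note finally that this argument is consistent with FCC being optimal at $\threshold=0$ (the classical densest-packing regime) and with all lattices tying once $\threshold$ exceeds every covering overlap, so the failure of FCC optimality is genuinely an intermediate-overlap phenomenon.
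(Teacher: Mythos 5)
Your proof is correct and follows essentially the same route as the paper: both arguments hinge on the identity $\voloverlap = \density - \union$ together with the well-known strict gap between the FCC and BCC covering densities, evaluated at the threshold where BCC becomes a covering. The only difference is presentational — you prove the strict inequality $\packingquality(\mathrm{BCC},\threshold)>\packingquality(\mathrm{FCC},\threshold)$ directly on the whole interval $[\threshold_B,\threshold_F)$, whereas the paper argues by contradiction at the single threshold $\threshold_B$.
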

\begin{proof}
Let $\omega$ be the overlap of the BCC lattice when choosing the radius to be its covering radius. Note that the density of this covering is $1+\omega$ by~\eqref{density-overlap-union}. 
Assume that the FCC lattice attains the same density for $\omega$. Then, again by~\eqref{density-overlap-union}, the union must be $1$, so the FCC lattice yields a sphere covering with the same density as the BCC lattice.
But this is a contradiction to the well-known fact that the FCC covering density is strictly larger than the BCC covering density.
\end{proof}

Interestingly, we will prove in Section \ref{dist_overlap} that the FCC lattice is in fact optimal for all values of $\threshold$ when measuring overlap by $\distoverlap$. Similarly, we can define a \emph{relaxed covering quality} as
$$\coveringquality(\lattice, \omega)\;:=\;\min_{r\geq 0} \left\{ \density_{\lattice}(r)\mid 1-\union_{\lattice}(r)\leq\threshold\right\}.$$

In words, we want as little overlap as possible while allowing only a certain
amount of uncovered space. Note that for $\threshold=0$, this is equivalent to the classical 
covering problem: We want to cover the whole space by balls minimizing the density. Similarly as in Lemma \ref{lemma_packing} we can prove that the BCC lattice is not optimal with respect to $\coveringquality$ for all values of $\threshold$ when measuring overlap by $\voloverlap$. However, the BCC lattice is optimal for all values of $\threshold$ when measuring overlap by $\distoverlap$ as we will see in Section \ref{dist_overlap}.

Since for the applications we described in the introduction the relevant quality measure seems to be $\packingquality$, we will mainly concentrate on this measure. However, the analysis could easily be extended to $\coveringquality$.

\vspace{0.4cm}

From now on, we focus on the lattices given by a diagonal distortion of the integer lattice in $\R^n$ as defined in~\eqref{dist_family}.
The parameter $\delta$ defines the amount of distortion, with $\delta=1$ denoting no distortion. For $\delta$ from $1$ to $0$, every point of the integer lattice undergoes a continuous motion towards its projection
onto the plane with normal vector $(1,\ldots,1)$. For $\delta\geq 1$, each lattice point moves continuously in the opposite direction. For $n=2$, the hexagonal lattice corresponds to $\delta=1/\sqrt{3}$ and $\delta=\sqrt{3}$,
and for $n=3$, the FCC lattice corresponds to $\delta=2$ and the BCC lattice to $\delta=1/2$; see~\cite{ek-covering} for more details.

To simplify notation, we write e.g.~$\density(\delta , r)$ instead of $\density_{\lattice_{\delta}}(r)$. Fixing a threshold $\threshold$ for the overlap, we would like to find the best lattice in the family such that $\packingquality(\delta,\omega)$ is maximized. The approach we take is to compute $\packingquality$ for a given $\delta$ in two steps:
\begin{enumerate}
\item Compute the largest ball radius $r(\delta, \omega)$ 
      such that $\overlap(\delta, r(\delta,\omega))\leq\threshold$.
\item Compute $\packingquality(\delta,\omega)=\density(\delta, r(\delta, \omega))$.
\end{enumerate}

\section{Distance-based overlap}
\label{dist_overlap}

In \cite{ellipsoid_packing}, an algorithm was developed for finding minimum overlap configurations of $N$ spheres (or more generally ellipsoids) packed into an ellipsoidal container. In order to get an efficient algorithm, the simplified distance-based overlap measure was used, which could be computed as a convex optimization problem. One can easily check that the problem of finding minimal overlap configurations of spheres with a certain density is equivalent to finding maximal density configurations of spheres with a certain overlap, the problem we study in this paper. It was observed in a few examples (see Example 3.4 in \cite{ellipsoid_packing}) that the optimal configuration of the spheres is invariant to scaling of the radii. This is in fact an important property for the application to chromosome packing, since the exact chromatin packing density is not known and one would hope that the positioning is robust to different scalings of the chromosomes. In the following, we prove that this scaling-invariance holds in infinite space when the sphere centers are restricted to lie on the 1-parameter distortion family of the integer grid.

For the 1-parameter family of diagonal distortions defined in (\ref{dist_family}), the density simplifies to

\begin{equation}
\density(\delta,r) = \frac{V_n r^n}{\delta},
\label{def_density}
\end{equation}

where $V_n$ denotes the (n-dimensional) volume of the n-dimensional unit ball, i.e.

$$V_n = \left\{ \begin{array}{ll}
\pi^{\frac{n}{2}}/\left(\frac{n}{2}\right)! & \quad\textrm{if $n$ is even},\\
\pi^{\frac{n-1}{2}}2^{\frac{n+1}{2}}/n!! & \quad\textrm{if $n$ is odd}.
  \end{array} \right. $$

Here $n!! = n \cdot(n-2) \cdot \dots \cdot 3 \cdot 1$ denotes the double factorial. Half the minimal distance between two lattice points, $\min_{p \in\partial{V}} \|p\|$, has been computed in \cite{ek-covering} (it corresponds to the packing radius):

\begin{equation}
\min_{p \in\partial{V}} \|p\| = \left\{ \begin{array}{ll}
\frac{1}{2} \delta \sqrt{n} & \quad\textrm{for $0 \leq \delta \leq \frac{1}{\sqrt{n+1}}$},\\
\frac{1}{2} \sqrt{1+\frac{\delta^2-1}{n}} & \quad\textrm{for $\frac{1}{\sqrt{n+1}} \leq \delta \leq \sqrt{n+1}$},\\
\frac{1}{2} \sqrt{2}, & \quad\textrm{for $\sqrt{n+1} \leq \delta$}.
  \end{array} \right.
  \label{packing_radius}
  \end{equation}
  
Using these formulas we now prove that the maximum density configuration does not depend on the amount of allowed overlap and is always attained by $\delta = \sqrt{n+1}$, which corresponds to the optimal packing lattice in the family for all $n\geq 2$ and over all lattices in dimension 2 and 3.

\begin{theorem}
\label{thm_overlap}
The lattice $\lattice_\delta$ which maximizes the relaxed packing quality w.r.t.~$\distoverlap$, i.e.
\begin{align*}
\max_{\delta >0, r\geq 0} \qquad& \density(\delta,r)
\\
\mbox{subject to} \qquad &
\distoverlap(\delta,r) \leq \omega
\end{align*}
is attained by $\delta = \sqrt{n+1}$ independent of the value of $\omega\in[0,1)$. 
\end{theorem}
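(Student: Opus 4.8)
The plan is to collapse the two-variable optimization into a single-variable one and then exploit a multiplicative separation of $\omega$ and $\delta$. First I would note that for fixed $\delta$ the density $\density(\delta,r)=V_nr^n/\delta$ is strictly increasing in $r$, so the optimum is attained where the overlap constraint is tight. Writing $\rho(\delta):=\min_{p\in\partial\voronoi}\norm{p}$ for the packing radius, so that $\min_{\ell\in\lattice\setminus\{0\}}\norm{\ell}=2\rho(\delta)$, definition~\eqref{distoverlap} gives $\distoverlap(\delta,r)=\max\!\left(1-\rho(\delta)/r,\,0\right)$. Hence for $\omega\in[0,1)$ the constraint $\distoverlap(\delta,r)\le\omega$ is equivalent to $r\le\rho(\delta)/(1-\omega)$, and the maximizing radius is $r(\delta,\omega)=\rho(\delta)/(1-\omega)$.

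Substituting this radius back yields
\[
\packingquality(\delta,\omega)=\density(\delta,r(\delta,\omega))=\frac{V_n}{(1-\omega)^n}\cdot\frac{\rho(\delta)^n}{\delta}.
\]
The key point is that $\omega$ enters only through the positive prefactor $V_n/(1-\omega)^n$, which is independent of $\delta$. Consequently the maximizer in $\delta$ does not depend on $\omega$ at all, which is exactly the scaling invariance being claimed, and it suffices to maximize $f(\delta):=\rho(\delta)^n/\delta$ over $\delta>0$.

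Next I would insert the piecewise formula~\eqref{packing_radius} for $\rho(\delta)$ and analyze $f$ on each piece. On $0<\delta\le 1/\sqrt{n+1}$ one gets $f(\delta)=(\sqrt{n}/2)^n\,\delta^{n-1}$, which is increasing for $n\ge 2$; on $\delta\ge\sqrt{n+1}$ one gets $f(\delta)=(1/\sqrt2)^n/\delta$, which is decreasing. On the middle interval $f$ is smooth and its logarithmic derivative has the same sign as $(n-1)(\delta^2-1)$, so $\delta=1$ is the unique interior critical point and it is a local minimum. Therefore $f$ attains its maximum at one of the two boundary values $\delta=1/\sqrt{n+1}$ or $\delta=\sqrt{n+1}$, and the whole problem reduces to comparing these two candidates.

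The hard part will be this final comparison for general $n$. I would form the ratio $f(\sqrt{n+1})/f(1/\sqrt{n+1})=(2/n)^{n/2}(n+1)^{(n-2)/2}$ and show it is $\ge 1$, with equality exactly at $n=2$ (reflecting that $\delta=\sqrt3$ and $\delta=1/\sqrt3$ describe the same hexagonal lattice). Raising to the $2/n$-th power reduces this to $2(n+1)/n\ge(n+1)^{2/n}$, and taking logarithms reduces it further to showing $\phi(n):=\ln 2+\tfrac{n-2}{n}\ln(n+1)-\ln n\ge 0$ for $n\ge 2$. Since $\phi(2)=0$, it is enough to verify $\phi'(n)>0$, which simplifies to the elementary inequality $2(n+1)\ln(n+1)\ge 3n$; this follows from a short monotonicity argument in the variable $m=n+1$. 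With the comparison settled, $\delta=\sqrt{n+1}$ is the global maximizer of $f$, hence of $\packingquality(\cdot,\omega)$ for every $\omega\in[0,1)$, which completes the proof.
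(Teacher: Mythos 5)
Your proposal is correct and follows essentially the same route as the paper: make the constraint tight, substitute the packing radius~\eqref{packing_radius} to get the piecewise density, show each branch is monotone (with the interior minimum at $\delta=1$), and compare the two boundary candidates $\delta=1/\sqrt{n+1}$ and $\delta=\sqrt{n+1}$. Your explicit factoring of the $\omega$-dependence into the prefactor $V_n/(1-\omega)^n$ and your reduction of the final comparison to $2(n+1)\ln(n+1)\ge 3n$ are nice touches --- the paper asserts the equivalent inequality $n^{n/2}(n+1)^{-(n-1)/2}\le 2^{n/2}(n+1)^{-1/2}$ without proof --- but they do not change the underlying argument.
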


\begin{proof}
By plugging the packing radius given in (\ref{packing_radius}) into the definition of $\distoverlap$ in (\ref{distoverlap}), we can solve for $r(\delta,\omega)$\footnote{Note that $ \min_{\ell\in\mathcal{L}\setminus\{0\}}(\norm{\ell}) = 2 \cdot \min_{p\in\partial{V}} \|p\|$}. Then plugging $r(\delta,\omega)$ into the formula for the density given in (\ref{density}) we get
\begin{equation}
\density(\delta,r(\delta,\omega)) = \left\{ \begin{array}{ll}
\frac{n^{\frac{n}{2}}V_n}{2^n(1-\omega)^n}\, \delta^{n-1} & \quad\textrm{for $0 \leq \delta \leq \frac{1}{\sqrt{n+1}}$},\\
\frac{V_n}{2^n(1-\omega)^n}\, \delta^{-1}\left(1+\frac{\delta^2-1}{n}\right)^{\frac{n}{2}} & \quad\textrm{for $\frac{1}{\sqrt{n+1}} \leq \delta \leq \sqrt{n+1}$},\\
\frac{V_n}{2^{\frac{n}{2}}(1-\omega)^n}\, \delta^{-1} & \quad\textrm{for $\sqrt{n+1} \leq \delta$}.\\
  \end{array} \right. 
  \label{eq_density}
  \end{equation}
The function $\density(\delta,r(\delta,\omega))$ for $n=3$ and $\omega=0.5$ is shown in Figure \ref{fig_overlap_dual_plot} (left). Since $\omega<1$, the constants in the function $\density(\delta,r(\delta,\omega))$ in (\ref{eq_density}) are positive. By taking derivatives with respect to $\delta$ we find that for $0 < \delta \leq 1/\sqrt{n+1}$ the density is strictly increasing for all values of $\omega$. Similarly, for the branch $1/\sqrt{n+1} \leq \delta \leq \sqrt{n+1}$ the density is strictly decreasing for $\delta<1$, achieves a minimum at $\delta = 1$, and is strictly increasing for  $\delta>1$, independent of the value of $\omega$. Finally, for $\delta \geq \sqrt{n+1}$ the density is strictly decreasing for all values of $\omega$. As a consequence,
 $$\max_{\delta>0}\; \density\!\left(\delta,r(\delta,\omega)\right) \; = \; \max\!\left(\!\density\!\left(\frac{1}{\sqrt{n+1}}, r\left(\frac{1}{\sqrt{n+1}},\omega\right)\!\!\right), \density\!\left(\!\sqrt{n+1},  r(\sqrt{n+1},\omega)\right)\!\!\right)\!.$$
Since $n^{\frac{n}{2}}(n+1)^{-\frac{n-1}{2}} \leq 2^{\frac{n}{2}}(n+1)^{-\frac{1}{2}}$ for all $n\geq 2$ with equality if and only if $n=2$, the maximum is attained by $\delta = \sqrt{n+1}$ with equality if and only if $n=2$, where both lattices correspond to the hexagonal lattice. 
\end{proof}

\begin{figure}[!bt]
\centering
\includegraphics[width=7cm]{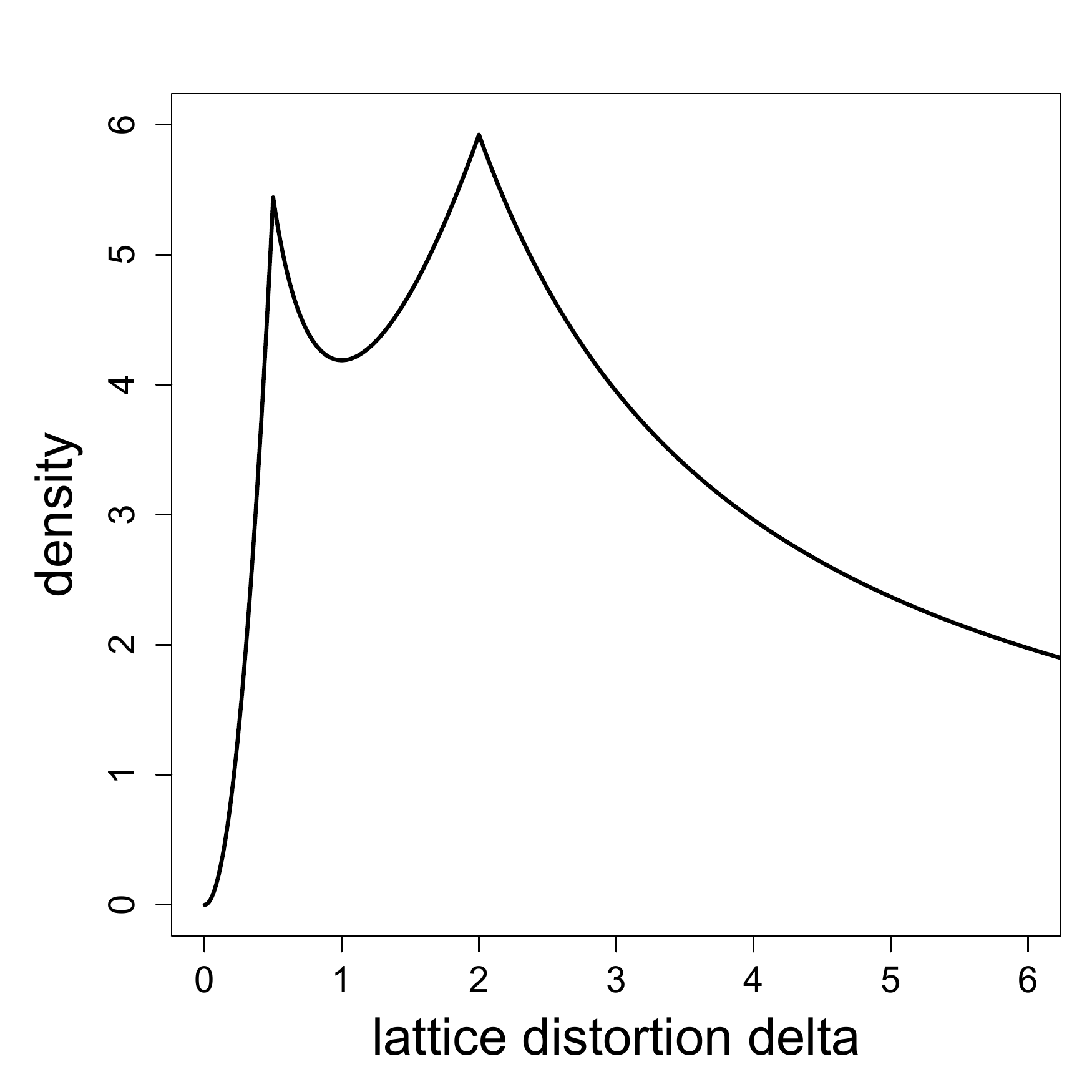} \qquad\qquad \includegraphics[width=7cm]{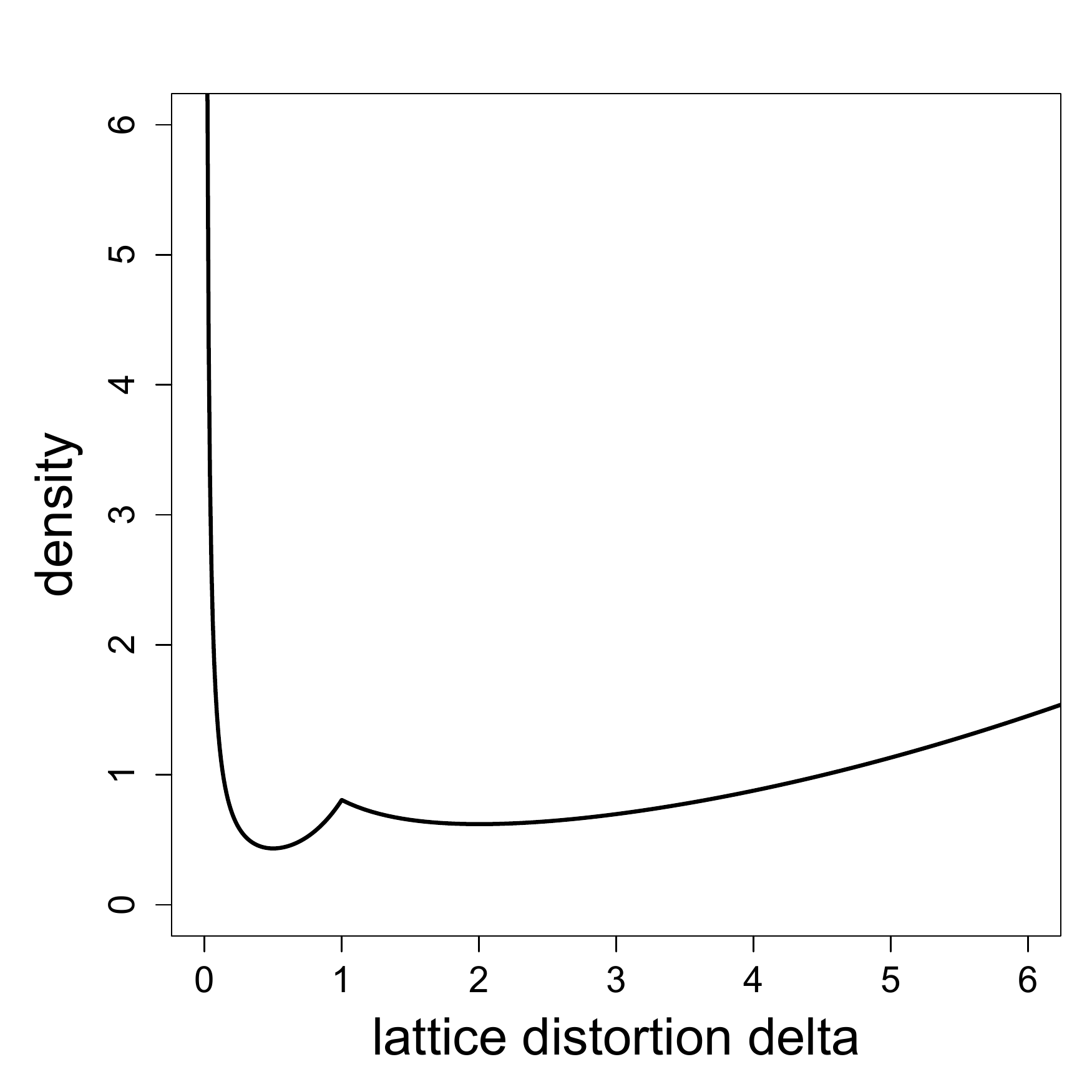}
\caption{$\packingquality$ (left) and $\coveringquality$ (right) as a function of the distortion parameter $\delta$ for $n=3$ and $\omega=0.5$.}
\label{fig_overlap_dual_plot}
\end{figure}

This proves that the sphere configuration which maximizes the density when allowing a certain overlap (measured by the distance-based overlap) is identical to the optimal packing configuration independent of the allowed overlap. We next briefly analyze the relaxed covering quality and show that in this case the optimum is always attained by the optimal covering configuration. Similarly as for $\distoverlap$, we use a linearized measure of the uncovered space $1-\union$. We define it as the largest diameter of a sphere which can be inscribed into the free-space, i.e.:

\begin{equation}
\label{eq_freespace_eta}
\freespace(\delta, r) = \max\left(\frac{\max_{p \in\partial{V}} \|p\|-r}{r},0\right)
\end{equation}

The $\max_{p \in\partial{V}} \|p\|$ has been computed in \cite{ek-covering} for the 1-parameter family of lattices under consideration (it corresponds to the covering radius):

\begin{equation}
\max_{p \in\partial{V}} \|p\| = \left\{ \begin{array}{ll}
\frac{\sqrt{n^2-1+(n^2+2)\delta^2+(n^2-1)\delta^4}}{\sqrt{12n}} & \quad\textrm{for $0 \leq \delta\leq 1$},\\
\frac{\sqrt{n^2-1+\delta^2}}{2 \sqrt{n}} & \quad\textrm{for $1 \leq \delta$ and $n$ odd},\\
\frac{\sqrt{n^2-2+\delta^2+\frac{1}{\delta^2}}}{2 \sqrt{n}} & \quad\textrm{for $1 \leq \delta$ and $n$ even}.
\end{array} \right.
\label{covering_radius}
\end{equation}

Using these formulas we can show that the maximum density configuration does not depend on the amount of allowed free-space and is always attained by $\delta = 1/\sqrt{n+1}$, which corresponds to the optimal covering lattice in the family for all $n\geq 2$ and over all lattices in dimension 2-5.

\begin{theorem}
\label{thm_freespace}
The lattice $\lattice_\delta$ which minimizes the relaxed covering quality, i.e.
\begin{align*}
\min_{\delta >0, r\geq 0} \qquad& \density(\delta,r)
\\
\mbox{subject to} \qquad &
\freespace(\delta,r) \leq \omega
\end{align*}
is attained by $\delta =1/ \sqrt{n+1}$ independent of the value of $\omega\in\mathbb{R}_{\geq 0}$. 
\end{theorem}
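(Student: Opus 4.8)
The plan is to mirror the proof of Theorem~\ref{thm_overlap}, exchanging the roles of the packing and covering radii and of maximization and minimization. Write $R(\delta):=\max_{p\in\partial V}\|p\|$ for the covering radius recorded in~\eqref{covering_radius}. Since $\density(\delta,r)=V_n r^n/\delta$ is strictly increasing in $r$, and since the constraint $\freespace(\delta,r)\le\omega$ is, whenever the free space is positive, equivalent to $R(\delta)-r\le\omega r$, i.e.\ to $r\ge R(\delta)/(1+\omega)$, the density is minimized at the smallest feasible radius $r(\delta,\omega)=R(\delta)/(1+\omega)$ (this is feasible for every $\omega\ge0$, which is why here $\omega$ ranges over all of $\R_{\ge0}$). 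Substituting this radius into the density gives
\begin{equation*}
\density(\delta,r(\delta,\omega))=\frac{V_n}{(1+\omega)^n}\cdot\frac{R(\delta)^n}{\delta}.
\end{equation*}
The prefactor $V_n/(1+\omega)^n$ is a positive constant independent of $\delta$, so minimizing the density over $\delta$ is equivalent to minimizing $f(\delta):=R(\delta)^n/\delta$, a function that does not involve $\omega$; this already establishes the claimed $\omega$-independence, and it remains only to locate the minimizer of $f$.

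Next I would substitute the three branches of~\eqref{covering_radius} and work with $\log f(\delta)=n\log R(\delta)-\log\delta$, so that $f'(\delta)=0$ becomes $n\,R'(\delta)/R(\delta)=1/\delta$ and, after clearing denominators, a polynomial in $x=\delta^2$. On the branch $0\le\delta\le1$ this is a quadratic in $x$ whose discriminant turns out to be a perfect square, leaving the single admissible root $x=1/(n+1)$; since $f(\delta)\to\infty$ as $\delta\to0^+$ and this is the only critical point in $(0,1)$, it is a local minimum at $\delta=1/\sqrt{n+1}$. On the branch $\delta\ge1$ the analogous computation (trivial for $n$ odd, a quadratic with perfect-square discriminant for $n$ even) yields the single critical point $\delta=\sqrt{n+1}$, which is a local minimum because $f(\delta)\to\infty$ as $\delta\to\infty$. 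The branches meet continuously at $\delta=1$ where $f$ has a local maximum (the left and right derivatives of $\log f$ at $1$ have opposite signs, $n-1>0$ on the left and negative on the right), so $f$ has exactly two local minima and
\begin{equation*}
\min_{\delta>0}f(\delta)=\min\!\left(f\!\left(\tfrac{1}{\sqrt{n+1}}\right),\,f\!\left(\sqrt{n+1}\right)\right).
\end{equation*}

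The main obstacle is the final comparison of these two candidate minima, which must favor the covering lattice $\delta=1/\sqrt{n+1}$ for every $n\ge2$. I would evaluate $R(\delta)^2$ at both points, obtaining $R(1/\sqrt{n+1})^2=\tfrac{n(n+2)}{12(n+1)}$ and $R(\sqrt{n+1})^2=\tfrac{n+1}{4}$ for $n$ odd, respectively $\tfrac{n(n+2)}{4(n+1)}$ for $n$ even. Substituting into $f$ and cancelling the common factors, the desired inequality $f(1/\sqrt{n+1})\le f(\sqrt{n+1})$ collapses to $n+1\le 3^{n/2}$ in the even case and to the slightly stronger $n+1\le\bigl(3(n+1)^2/(n(n+2))\bigr)^{n/2}$ in the odd case; both hold for all $n\ge2$ with equality exactly at $n=2$, where $\delta=1/\sqrt3$ and $\delta=\sqrt3$ both describe the hexagonal lattice. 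This identifies $\delta=1/\sqrt{n+1}$ as the minimizer and completes the argument. The bulk of the work is the two routine logarithmic-derivative computations; the only point requiring genuine care is verifying that the relevant discriminants factor as perfect squares, so that the critical points come out to the clean values $1/\sqrt{n+1}$ and $\sqrt{n+1}$ and the closing exponential inequalities can be checked directly.
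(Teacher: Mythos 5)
Your proof is correct and is essentially the paper's own argument: the paper proves this theorem simply by declaring it ``analogous'' to Theorem~\ref{thm_overlap}, and your write-up is exactly that analogy carried out --- solve the constraint for $r(\delta,\omega)=R(\delta)/(1+\omega)$, substitute into $\density(\delta,r)=V_nr^n/\delta$ so that the $\omega$-dependence factors out as $(1+\omega)^{-n}$, locate the unique critical point on each branch of \eqref{covering_radius} (which your algebra gets right: $x=1/(n+1)$ on the branch $\delta\le 1$, $x=n+1$ on the branch $\delta\ge 1$, with the spurious roots negative), and close with the comparison of the two candidates, where your values $R(1/\sqrt{n+1})^2=\tfrac{n(n+2)}{12(n+1)}$, $R(\sqrt{n+1})^2=\tfrac{n+1}{4}$ ($n$ odd) and $\tfrac{n(n+2)}{4(n+1)}$ ($n$ even) are all correct. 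One harmless slip: the odd-case inequality $n+1\le\left(3(n+1)^2/(n(n+2))\right)^{n/2}$ is actually \emph{weaker} than the even-case inequality $n+1\le 3^{n/2}$, not stronger, since $3(n+1)^2/(n(n+2))>3$; both hold for all $n\ge 2$ as you state, so this does not affect the argument.
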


\begin{proof} The proof is analogous to the proof of Theorem \ref{thm_overlap}. The function $\density(\delta,r(\delta,\omega))$ for $n=3$ and $\omega=0.5$ is shown in Figure \ref{fig_overlap_dual_plot} (right).
\end{proof}

\section{Volume-based overlap}
\label{vol_overlap}

In this section we analyze $\packingquality$ for the volume-based overlap measure in dimension 2 and 3. As we have already pointed out in Lemma~\ref{lemma_packing}, we cannot expect the same behavior as for the distance-based overlap measure discussed in the previous section. 
However, the lemma only states that the FCC lattice, which is optimal for $\threshold=0$ is worse than the BCC lattice for some value of $\threshold$. This does not rule out the possibility
of other lattices being optimal. This section will perform a deeper investigation of the optimal lattice configurations, starting with the two-dimensional case. 

\subsection{Dimension 2}
\label{sec_dim_2}

First of all, note that in dimension 2, the lattice for $\delta$ is a scaled version of the lattice for $\frac{1}{\delta}$. Because of this symmetry, it suffices to study all lattices with $0 < \delta\leq 1$.

For analyzing the volume-based overlap measure, we first derive a formula for the volume of $V\cap B_r$. This requires
the investigation of the Voronoi cell $V$ in some detail. $V$ is bounded by six bisectors: four of them
with the lattice points $\pm e_1^{(\delta)}, \pm e_2^{(\delta)}$,
and two with the lattice points $\pm(e_1^{(\delta)}+e_2^{(\delta)})$. See Figure~\ref{fig:Voronoi_cells_delta_smaller_one} for an illustration.
We call the bisectors of type $1$ and type $2$, respectively.
Their distances to the origin are given by $r_1$ and $r_2$, respectively, with
$$r_1\;:=\;\frac{\sqrt{\delta^2+1}}{2\sqrt{2}}, \qquad r_2\;:=\;\frac{\delta}{\sqrt{2}}.$$
Note that $r_1>r_2$ if and only if $\delta<\sqrt{\frac{1}{3}}$.

There are six boundary vertices of $V$ and they all have the same distance to the origin, namely
$$r_3\;:=\;\frac{\delta^2+1}{2\sqrt{2}},$$
which agrees with the covering radius computed in~\cite{ek-covering} (see also (\ref{covering_radius})).
As expected, $r_3\geq\max\{r_1,r_2\}$, with equality if and only if $\delta=1$.

\begin{figure}[!tb]
\centering
\includegraphics[width=4cm]{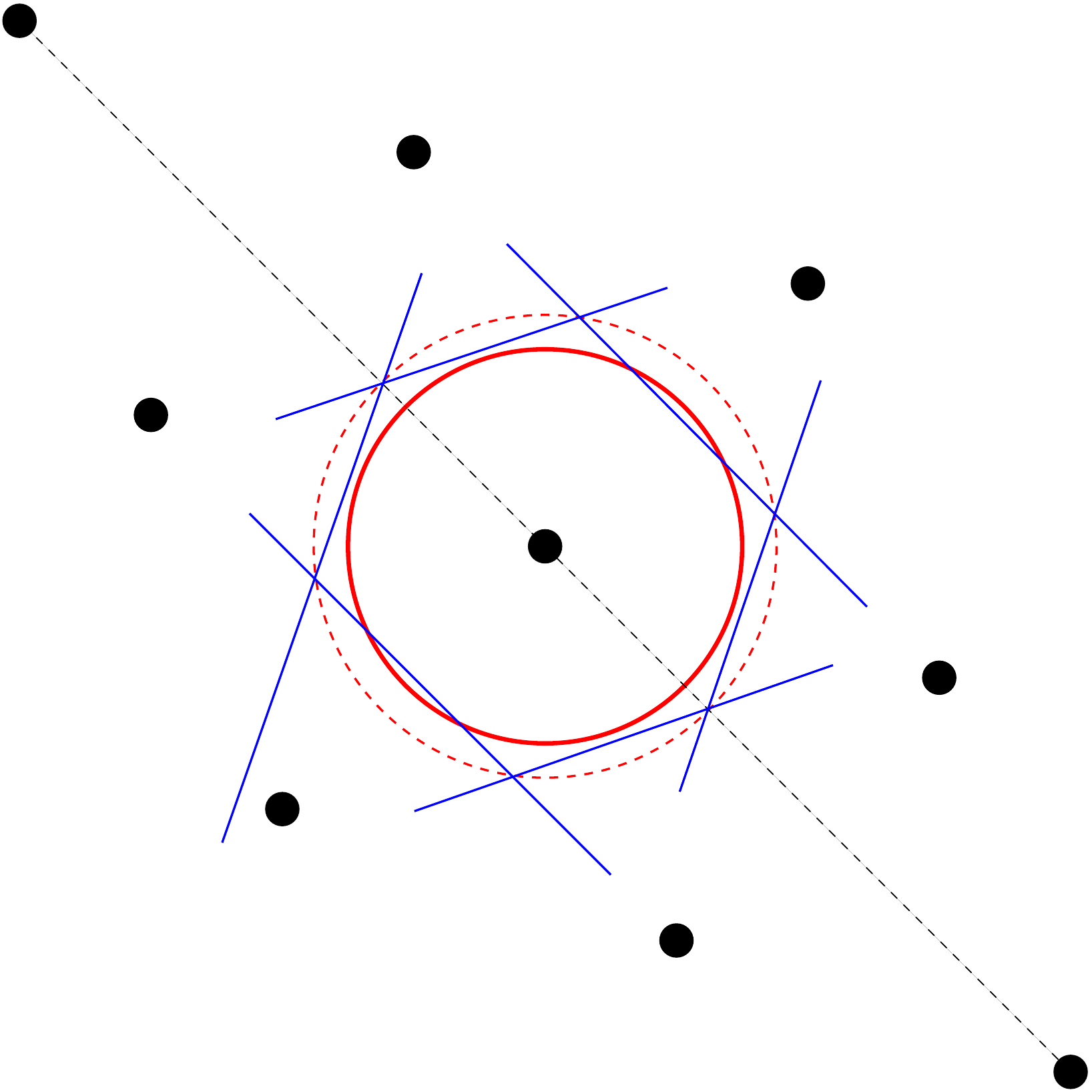}
\hspace{1.2cm}
\includegraphics[width=4cm]{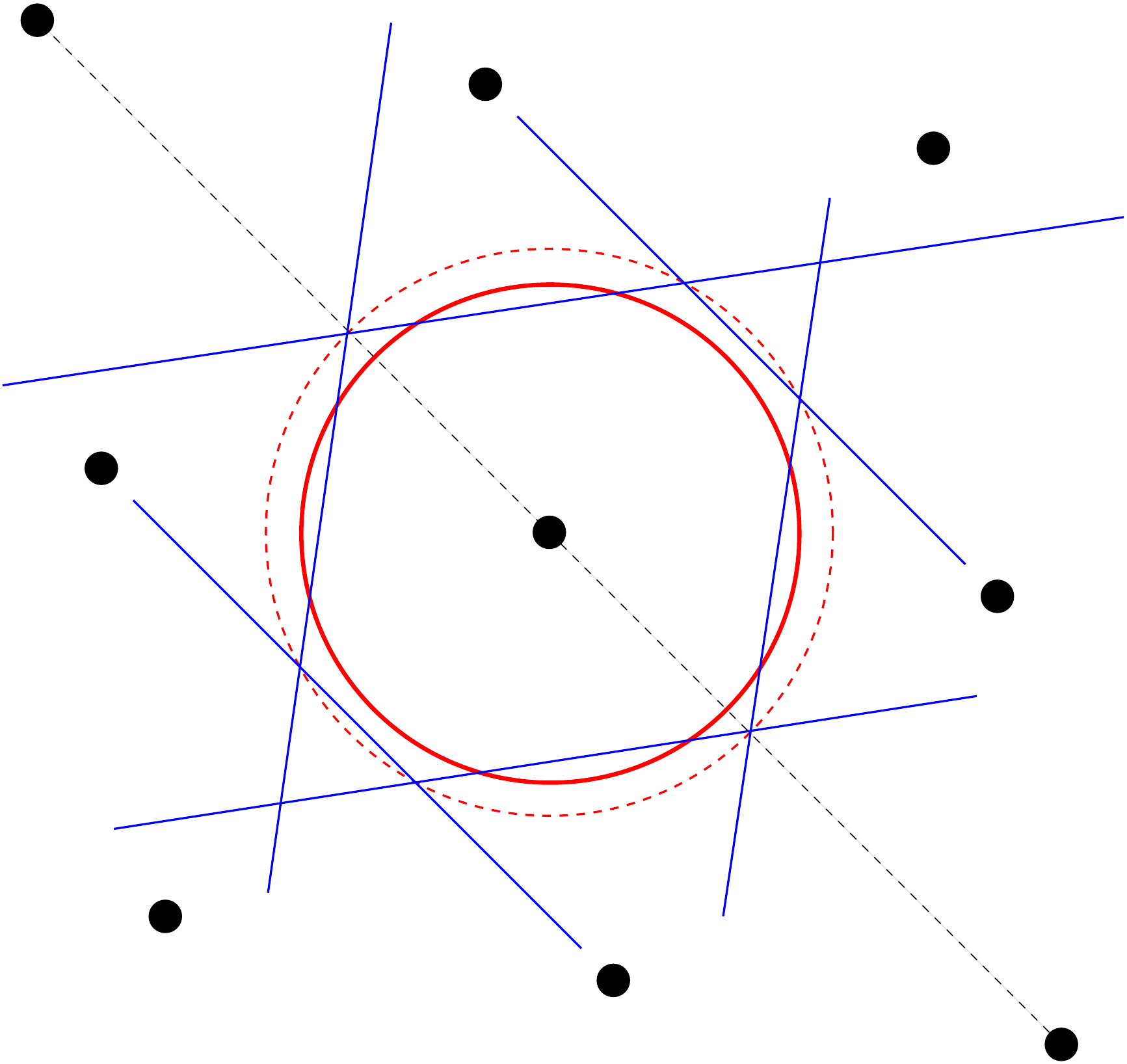}
\hspace{1.5cm}
\includegraphics[width=4cm]{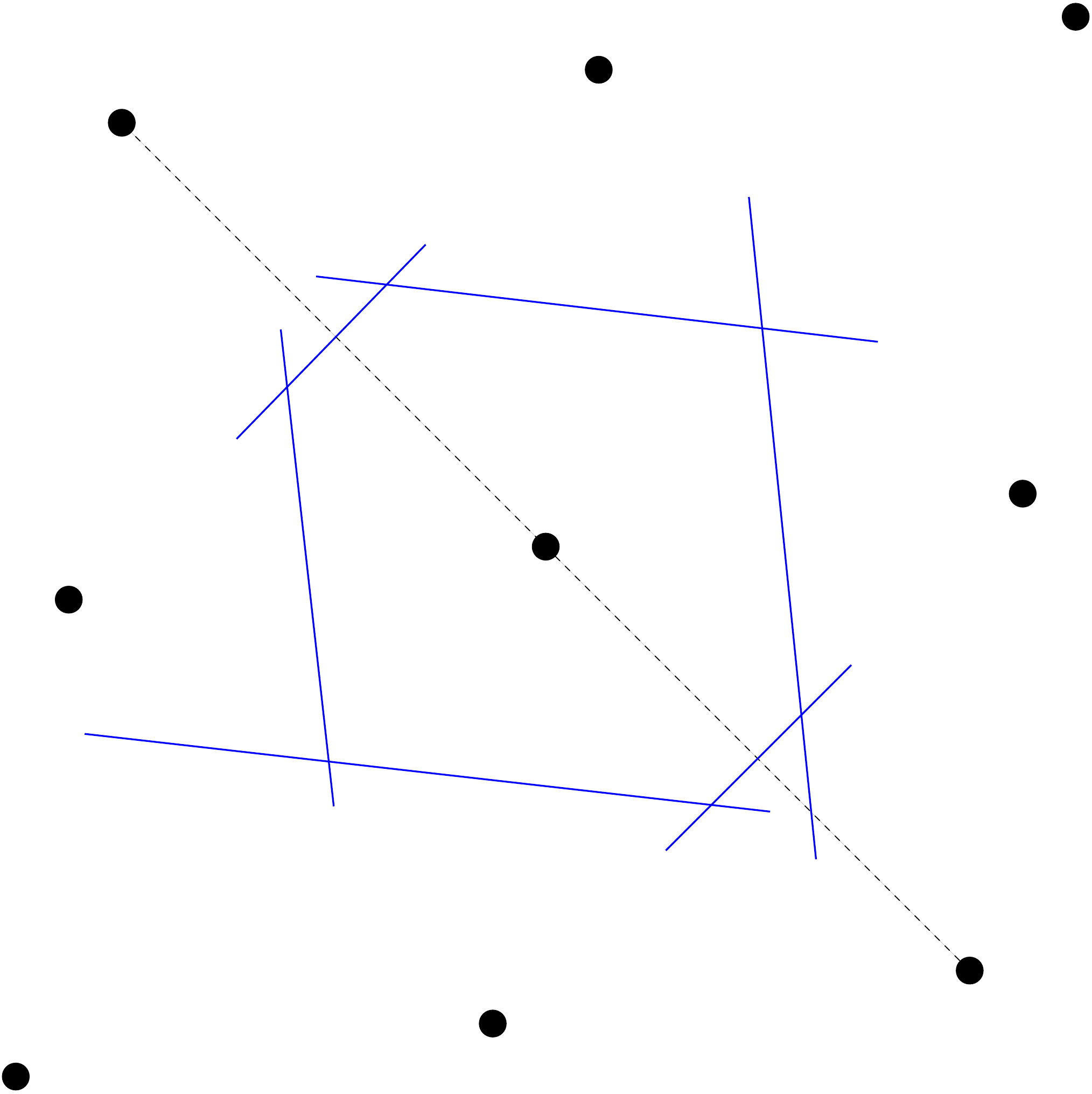}
\caption{The Voronoi cell $V$ for two different values of $\delta<1$ (left and middle) and $\delta>1$ (right). On the left, the bisectors of type $2$ are hit first,
whereas in the middle bisectors of type $1$ are hit first. Note that all lattice points neighboring the origin lie on a common circle around the origin.}
\label{fig:Voronoi_cells_delta_smaller_one}
\end{figure}

With this data we can derive a formula for the volume of $V\cap B_r$: If $r\leq \min\{r_1,r_2\}$, $B_r$ is completely contained
in $V$ and the volume equals the volume of $B_r$. If $r\geq r_3$,
$B_r$ contains all boundary vertices of $V$ and thus all of $V$ (of volume $\delta$), by convexity. In the last case where $\min\{r_1,r_2\} < r < r_3$, the part of $B_r$ that is
not in $V$ is the union of up to six circular segments. Their area is given by
$$A=\frac{r^2}{2}(\Theta-\sin\Theta),$$
where $\Theta$ is the angle at the origin induced by the chord. 
This angle can be expressed as
$$\Theta=2\arccos\left(\frac{d}{r}\right),$$
where $d$ is the smallest distance of the chord to the origin.

In our case, the chord is given by a bisector. Depending on the type of the bisector, 
$d$ is either equal to $r_1$ or equal to $r_2$.
So we define
$$\Theta_1:=\begin{cases}
0 & r<r_1,\\
2\arccos\left(\frac{r_1}{r}\right) & r\geq r_1,
\end{cases}\qquad \Theta_2:=\begin{cases}
0 & r<r_2,\\
2\arccos\left(\frac{r_2}{r}\right) & r\geq r_2.
\end{cases}$$
Since the circular segments do not intersect for any $r<r_3$
(because an intersection would imply that a boundary vertex of $V$
is part of $B_r$) and there are four bisectors of type $1$ and two bisectors of type $2$,
it follows that

$$\vol (V\cap B_r)=\begin{cases}
\pi r^2 & 0\leq r\leq\min\{r_1,r_2\},\;\,\, 0<\delta\leq 1,\\
r^2(\pi-2\Theta_1-\Theta_2+2\sin\Theta_1+\sin\Theta_2) & \min\{r_1,r_2\} \leq r \leq r_3, \; 0<\delta\leq 1,\\
\delta & r_3\leq r,\;\qquad\qquad\qquad\, 0<\delta\leq 1.
\end{cases}$$
Using this formula, we can now prove the following theorem:

\begin{theorem}
In dimension 2, the lattice $\lattice_\delta$ which maximizes the relaxed packing quality w.r.t. $\voloverlap$, i.e.
\begin{align*}
\max_{\delta >0, \;r\geq 0} \qquad& \density(\delta,r)
\\
\mbox{subject to} \qquad &
\voloverlap(\delta,r) \leq \omega
\end{align*}
is attained by the hexagonal lattice (i.e.~$\delta \in\{1/\sqrt{3},\sqrt{3}\}$) independent of the value of $\omega\in\mathbb{R}_{\geq 0}$. 
\end{theorem}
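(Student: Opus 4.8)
The plan is to avoid inverting the (transcendental) overlap equation of Step~1 and instead to compare all lattices at a \emph{fixed density}. By the symmetry noted above I restrict to $0<\delta\le 1$, where the hexagonal lattice is $\delta=1/\sqrt3$ and $\density(\delta,r)=\pi r^2/\delta$. For fixed $\delta$ both $\density(\delta,r)$ and $\voloverlap(\delta,r)$ are continuous and nondecreasing in $r$, and $\density$ is strictly increasing, so I reparametrize overlap by density: let $g_\delta(D):=\voloverlap(\delta,r)$ where $r=\sqrt{D\delta/\pi}$ is the unique radius with $\density(\delta,r)=D$. Then $\packingquality(\delta,\omega)=\sup\{D:g_\delta(D)\le\omega\}$, so if I show that $\delta=1/\sqrt3$ minimizes $g_\delta(D)$ over $\delta$ for \emph{every} fixed $D$, the inclusion $\{D:g_{1/\sqrt3}(D)\le\omega\}\supseteq\{D:g_\delta(D)\le\omega\}$ yields $\packingquality(1/\sqrt3,\omega)\ge\packingquality(\delta,\omega)$ for all $\omega$ at once. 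Since $g_\delta(D)=D-\union(\delta,r)$ by~\eqref{density-overlap-union}, minimizing overlap at fixed $D$ is the same as maximizing the fraction $\union$ of the cell covered by a concentric disk of area $D\cdot\vol V$, and geometrically the roundest cell (the regular hexagon at $\delta=1/\sqrt3$) should win.

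I would then split the analysis into three density ranges using the piecewise formula for $\vol(V\cap B_r)$. For $D\le\pi/(2\sqrt3)$ the hexagonal disk still fits inside its cell, so $g_{1/\sqrt3}(D)=0$, trivially minimal; here I invoke that $\delta=1/\sqrt3$ attains the maximal packing density $\pi/(2\sqrt3)$ in the family (obtained by maximizing $\density$ at the packing radius from~\eqref{packing_radius}), which guarantees that no other lattice has zero overlap at a density where the hexagonal one has positive overlap. For large $D$ the hexagonal disk already covers its whole cell, so $\union=1$ and $g_{1/\sqrt3}(D)=D-1$, which equals the absolute lower bound $D-1$ valid for every lattice (as $\union\le 1$); moreover a short computation shows that the covering-radius density $\pi(\delta^2+1)^2/(8\delta)$ is itself minimized at $\delta=1/\sqrt3$, so the hexagonal lattice enters this regime first and the ranges glue consistently. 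The only substantive range is the \emph{middle} one, where every $\delta\in(0,1]$ is in the partial-cover regime; substituting $r^2=D\delta/\pi$ into the circular-segment formula gives $g_\delta(D)=\tfrac{D}{\pi}F(\delta)$ with $F(\delta):=2(\Theta_1-\sin\Theta_1)+(\Theta_2-\sin\Theta_2)$ and $\Theta_i=2\arccos(r_i/r)$, so the task reduces to minimizing $F(\delta)$.

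Finally I would show $F$ is minimized at $\delta=1/\sqrt3$. Near $\delta=1/\sqrt3$ both bisector types are met, so $F$ is smooth there; writing $u_i:=r_i/r$ and using $1-\cos\Theta_i=2(1-u_i^2)$ one obtains
\begin{equation*}
\frac{dF}{d\delta}=-\frac{\pi}{2D}\left(\frac{(1-1/\delta^2)\sqrt{1-u_1^2}}{u_1}+\frac{2\sqrt{1-u_2^2}}{u_2}\right),
\end{equation*}
and since $r_1=r_2$ (hence $u_1=u_2$) and $1/\delta^2-1=2$ at $\delta=1/\sqrt3$, the bracket vanishes, so $\delta=1/\sqrt3$ is a critical point. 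The main obstacle is promoting this to a \emph{global} minimum on $(0,1]$: as $\delta$ leaves $1/\sqrt3$ the set of bisectors actually reached by $B_r$ changes (for $\delta$ near $0$ only the type-$2$ bisectors, so $\Theta_1=0$; for $\delta$ near $1$ only the type-$1$ bisectors, so $\Theta_2=0$), so $F$ consists of three smooth pieces with different derivatives. I would verify $F'<0$ on $(0,1/\sqrt3)$ and $F'>0$ on $(1/\sqrt3,1)$ piecewise and check continuity of $F$ (with the correct one-sided signs) at the two regime-change values of $\delta$; the symmetry $\delta\mapsto 1/\delta$, which forces $\delta=1$ to be a critical point where only type-$1$ segments are present, serves as a useful consistency check. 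Carrying out this transcendental sign analysis across the three pieces is the heart of the proof, while the reduction, the two easy density ranges, and the critical-point computation are routine.
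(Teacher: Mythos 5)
Your proposal is correct, and it is organized along a genuinely different (dual) route from the paper's. The paper fixes the overlap budget $\omega$, defines $r(\delta,\omega)$ implicitly by $\voloverlap(\delta,r)=\omega$, and differentiates the density along this constraint via the implicit function theorem, obtaining three branchwise formulas for $\partial\density/\partial\delta$ whose signs it then asserts. You instead fix the density $D$, invert the (explicitly invertible) density relation to get $r=\sqrt{D\delta/\pi}$, and compare overlaps at fixed density, concluding by nesting of sublevel sets; this eliminates the implicit function theorem entirely. The two computations are in fact equivalent: multiplying your bracket by $\delta^2\sqrt{\delta^2+1}>0$ and using $u_1^2=(\delta^2+1)/(8r^2)$, $u_2^2=\delta^2/(2r^2)$ turns it into exactly the numerator $(\delta^2-1)\sqrt{8r^2-\delta^2-1}+2\delta\sqrt{\delta^2+1}\sqrt{2r^2-\delta^2}$ of the paper's third branch, and your two one-sided pieces ($\Theta_1=0$, resp.\ $\Theta_2=0$) are the paper's first two branches, whose signs are manifest from your formula. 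What you call ``the heart of the proof''---the global sign of the mixed-regime derivative---is thus the same claim the paper itself asserts without detail, and it requires no transcendental estimates: both quantities being compared are nonnegative for $\delta\le 1$, so you may square, and the difference of squares factors as
\begin{equation*}
4\delta^2(\delta^2+1)(2r^2-\delta^2)-(1-\delta^2)^2(8r^2-\delta^2-1)\;=\;(3\delta^2-1)\left(8r^2-(\delta^2+1)^2\right),
\end{equation*}
which for $r<r_3$ (where the second factor is negative, since $8r_3^2=(\delta^2+1)^2$) is positive exactly when $\delta<1/\sqrt{3}$ and negative exactly when $\delta>1/\sqrt{3}$, vanishing identically at $\delta=1/\sqrt{3}$ or $r=r_3$. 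This confirms your claimed monotonicity $F'<0$ on $(0,1/\sqrt{3})$ and $F'>0$ on $(1/\sqrt{3},1)$ in all three pieces (continuity across piece boundaries holds because the corresponding cap term vanishes there), so your plan closes. What your route buys is the clean reduction $\packingquality(\delta,\omega)=\sup\{D: g_\delta(D)\le\omega\}$ and the explicit factorization $g_\delta(D)=\tfrac{D}{\pi}F$; the price is having to justify separately that in the middle density range every lattice is in the partial-cover regime, which you do correctly by invoking that $\delta=1/\sqrt{3}$ simultaneously maximizes the packing density and minimizes the covering density within the family.
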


\begin{proof}
Let $\,\threshold\,$ and $\,\delta\,$ be fixed. 
Our goal is to compute 
$\density(\delta,r)$ where $r:=r(\delta,\threshold)$ is chosen maximally such that
$\voloverlap(\delta,r)\leq\threshold$. Observe that the maximal $r$ is certainly
at least the packing radius $\min\{r_1,r_2\}$. This results in the packing density, which is maximized by the hexagonal lattice. Moreover, if $\threshold$ is sufficiently large to allow a covering, i.e. $\omega\geq\voloverlap(\delta,r_3)$, the maximal density is attained at the best covering. This is known to be the hexagonal lattice. So we can concentrate on the case $\min\{r_1,r_2\}\leq r\leq r_3$ where
\begin{equation}
0\leq \voloverlap (\delta, r) \leq \voloverlap (\delta, r_3) =  \frac{\pi(\delta^2+1)^2}{8\delta}-1,
\label{eq_voloverlap_2}
\end{equation}
meaning that
\begin{equation}
\voloverlap (\delta, r)=\frac{r^2}{\delta}(2\Theta_1+\Theta_2-2\sin\Theta_1-\sin\Theta_2).
\label{eq_voloverlap}
\end{equation}

Consider the function $F(\delta,\threshold,r):=\omega-\voloverlap(\delta,r)$,
which is defined for $(\threshold, \delta, r)$ in the limits of interest given in \eqref{eq_voloverlap_2}. By definition, $r=r(\delta,\threshold)$ satisfies
$F(\delta,\threshold,r(\delta,\threshold))=0$.
The density is given by
$$\density(\delta,r(\delta,\threshold)) = \frac{\pi\cdot r(\delta,\threshold)^2}{\delta},$$
which we want to maximize w.r.t.~$\delta$. This requires computing the derivative of $r(\delta,\threshold)$ w.r.t.~$\delta$. We do this by using the implicit function theorem
$$\frac{\partial r}{\partial\delta}(\delta,\threshold) = -\frac{\frac{\partial{F}}{\partial \delta} (\delta,\threshold,r)}
{\frac{\partial{F}}{\partial r} (\delta,\threshold,r)}.$$
After some calculations we find
\begin{equation*}
\frac{\partial\;\density(\delta,r(\delta,\threshold))}{\partial \delta} = \begin{cases}
\frac{\pi\sqrt{2r^2-\delta^2}}{2\delta\arccos\left(\frac{\delta}{r\sqrt{2}}\right)}& r_2\leq r \leq r_1,\; 0<\delta<\frac{1}{\sqrt{3}},\\ 
\frac{\pi(\delta^2-1)\sqrt{8r^2-\delta^2-1}}{8\delta^2\sqrt{\delta^2+1}\arccos\left(\frac{\delta^2+1}{8r^2}\right)}& r_1\leq r \leq r_2, \,\frac{1}{\sqrt{3}}<\delta<1,\\
\frac{\sqrt{8r^2-\delta^2-1}(\delta^2-1)+2\delta\sqrt{2r^2-\delta^2}\sqrt{\delta^2+1}}{4r\sqrt{\delta^2+1}\left(2\arccos\sqrt{\frac{\delta^2+1}{8r^2}}+\arccos\left(\frac{\delta}{r\sqrt{2}}\right)\right)}& \max(r_1,r_2)\leq r\leq r_3,\; 0<\delta<1.
\end{cases}
\label{eq_derivative}
\end{equation*}

One can easily check that the first derivative is non-negative for any $\delta$, except if $r$ equals the packing radius $r_2$ corresponding to $\threshold=0$, and we know the
optimal packing for this case. Similarly, the second derivative is non-positive except if $r$ equals the packing radius $r_1$. 
The third derivative is zero either if $\delta=\frac{1}{\sqrt{3}}$ or if $r$ equals the covering radius $r_3$ corresponding to $\threshold\geq\voloverlap(\delta,r_3)$, in which case the hexagonal lattice
is optimal as we argued above. Moreover, for $r<r_3$, the derivative is increasing for $\delta<{\frac{1}{\sqrt 3}}$ and decreasing for $\delta>{\frac{1}{\sqrt 3}}$. This concludes the proof.
\end{proof}

\subsection{Dimension 3}

\ignore{
The overlap $\voloverlap$ is given by the union of the spherical caps obtained by intersecting the neighboring balls with $V$ and can be computed using inclusion-exclusion:
\begin{equation}
\label{in_ex}
\voloverlap(\delta, r)= \frac{\sum_{\sigma\in\textrm{star(0)}}(-1)^{|\sigma|} \;\vol\left(V\cap\left(\cap_{j\in\sigma} B_r(j)\right)\right)}{\delta},
\end{equation}
where the sum is taken over all simplices $\sigma$ in the \emph{star} of the origin coming from the $\alpha$-shape with $\alpha = r$. Note that in dimension 3 we have $|\sigma|\leq 4$~(see \cite{union-balls}). Formulas for the intersection of one, two and three spherical caps have been described in \cite{edelsReport}. In order to understand for which radii we need to take into account the intersection of spherical caps, we need to do a similar analysis of the Voronoi cell V as in Section \ref{sec_dim_2}.
}

In three dimensions, the symmetry between $\delta$ and $\frac{1}{\delta}$ is lost, and we need to analyze both branches.
\ignore{
The analysis in dimension 2 was greatly simplified by the fact that all boundary vertices of the Voronoi cell $V$ have the same distance to the origin. Hence, a growing ball centered at the origin hits all vertices at the same time and we do not need to account for intersections of circular segments. In dimension 3, the same holds for $0<\delta\leq 1$, since the Delaunay triangulation consists of Freudenthal simplices only and they are all of the same kind. However, this is not true for $\delta\geq 1$, where the Delaunay triangulation equals the slice decomposition (see \cite[Sec.~2]{ek-covering}). This complicates the computations greatly.
}
We first discuss the case $0<\delta\leq 1$: Imagine that $r$ increases from 0 to $\infty$. Initially, the volume of the intersection $V\cap B_r$ equals the volume of the ball.
When increasing the ball radius $r$, there are three possibilities: 

\begin{enumerate}
\item[(i)] We hit a bisector plane. From now on we have to subtract 
a spherical cap from the volume. There are a total of 14 bisector planes of three different types. Their distance to the origin and number of occurrences are:

$$r_1:=\sqrt{\frac{\delta^2+2}{12}} \;\;\;\textrm{(6 planes)},\quad r_2:=\sqrt{\frac{2\delta^2+1}{6}} \;\;\;\textrm{(6 planes)},\quad r_3:=\frac{\delta\sqrt{3}}{2} \;\;\;\textrm{(2 planes)}.$$ 
\item[(ii)] We hit a boundary edge of $V$, where two bisector planes are meeting.
From now on, we have to add to the volume the intersection of the two spherical caps
involved (because they are counted twice). There are a total of 36 trisector edges of two different types. Their distance to the origin, number of occurrences, and types of bisector planes between the 3 involved spheres are:
$$r_4:=\frac{\delta^2+2}{3\sqrt{2}} \,\;\;\textrm{(18 edges of type 1-1-2)},\quad r_5:=\frac{\sqrt{(\delta^2+2)(2\delta^2+1)}}{2\sqrt{3}}  \,\;\;\textrm{(18 edges of type 1-2-3)}.$$
However, note that the volume of the cap intersection depends on the type of the bisector plane between the two spheres that are not centered at the origin. We get $5$ different subtypes,
four of them appearing $6$ times, and one appearing $12$ times in the polytope.

\item[(iii)] We hit a boundary vertex of $V$. All 24 boundary vertices have the same distance to the origin, namely the covering radius
$$r_6:=\frac{1}{6}\sqrt{8\delta^4+11\delta^2+8}.$$
When $r$ exceeds $r_6$ the whole Voronoi cell $V$ is covered, so the intersection
has volume $\delta$.
\end{enumerate}

Depending on the value of $\delta$ we have the following ordering of the critical radii:
\begin{eqnarray*}
r_3\leq r_1\leq r_2\leq r_5\leq r_4\leq r_6 \qquad &\textrm{for } 0\leq\delta\leq 1/2,\\
r_1\leq r_3\leq r_2\leq r_4\leq r_5\leq r_6 \qquad & \textrm{for } 1/2\leq\delta\leq \sqrt{2/5},\\
r_1\leq r_2\leq r_3\leq r_4\leq r_5\leq r_6 \qquad & \textrm{for } \sqrt{2/5}\leq\delta\leq \sqrt{\frac{19}{4}-\frac{3}{4}\sqrt{33}},\\
r_1\leq r_2\leq r_4\leq r_3\leq r_5\leq r_6 \qquad & \textrm{for } \sqrt{\frac{19}{4}-\frac{3}{4}\sqrt{33}}\leq\delta\leq 1.
\end{eqnarray*}

So $\voloverlap(\delta,r)$ seen as a function in $\delta$ has $4$ branches. In every branch, the interval which $r$ falls into determines how many and which types of cap intersections have to be taken into account to compute the volume-based overlap. 

For $\delta>1$, a similar analysis can be performed. However, there is one remarkable difference: The vertices of $V$ are no longer arranged in the same distance around the origin. More precisely,
there are $2$ vertices at distance $s_1$ and $6$ vertices at distance $s_2$ with
 $$s_1:=\frac{\delta^2+2}{2\sqrt{3}\delta}, \qquad s_2:=\frac{\sqrt{\delta^2+8}}{2\sqrt{3}}.$$
Note that $s_1<s_2$ and $s_2$ is the covering radius. So for $\delta>1$ and $s_1<r<s_2$ we need to take into account also triple intersections of spherical caps.

\begin{figure}
\centering
\includegraphics[width=5.5cm]{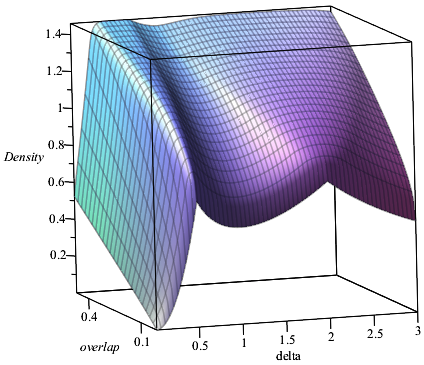}\;
\includegraphics[width=5.2cm]{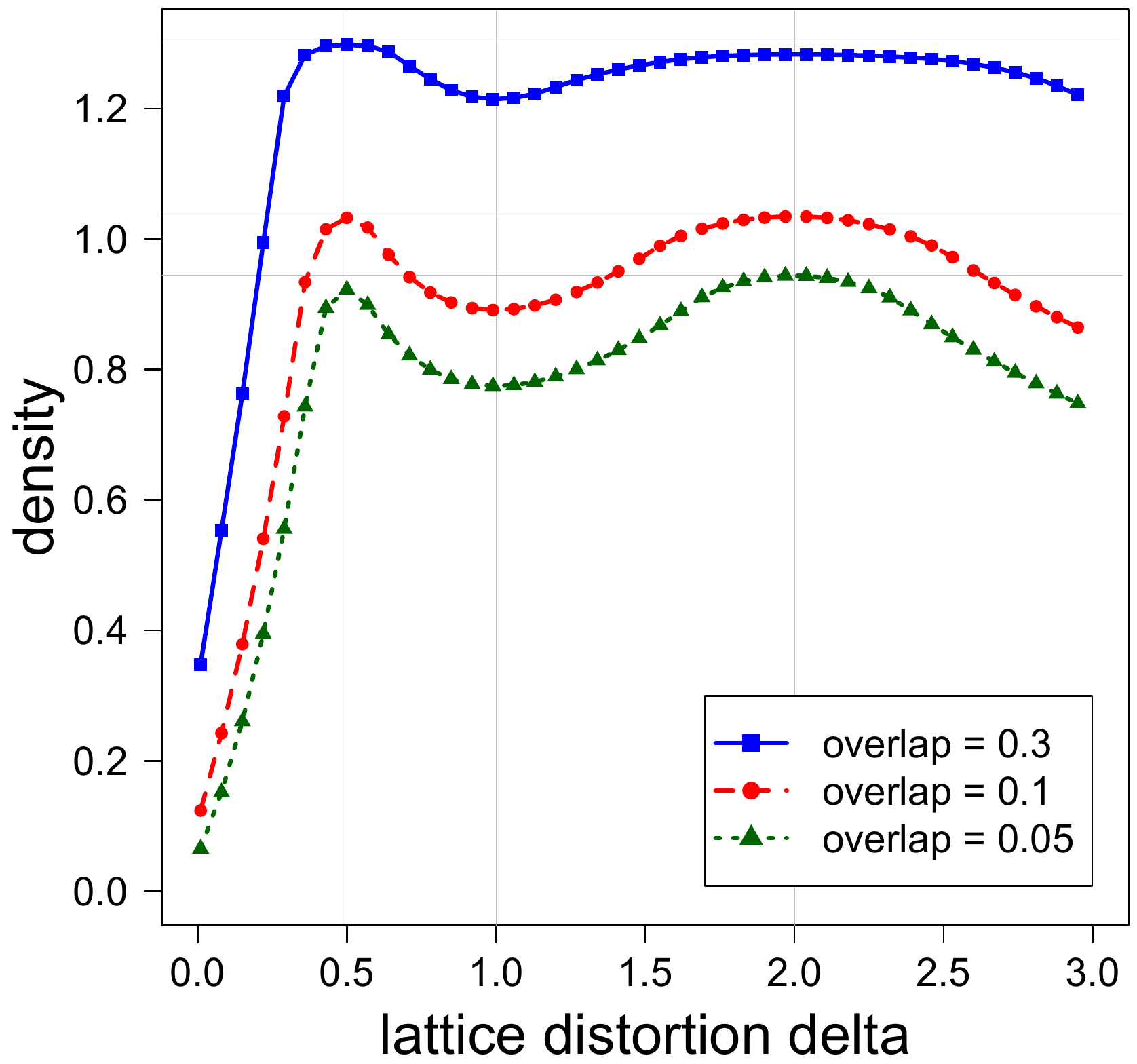}\;
\includegraphics[width=5.2cm]{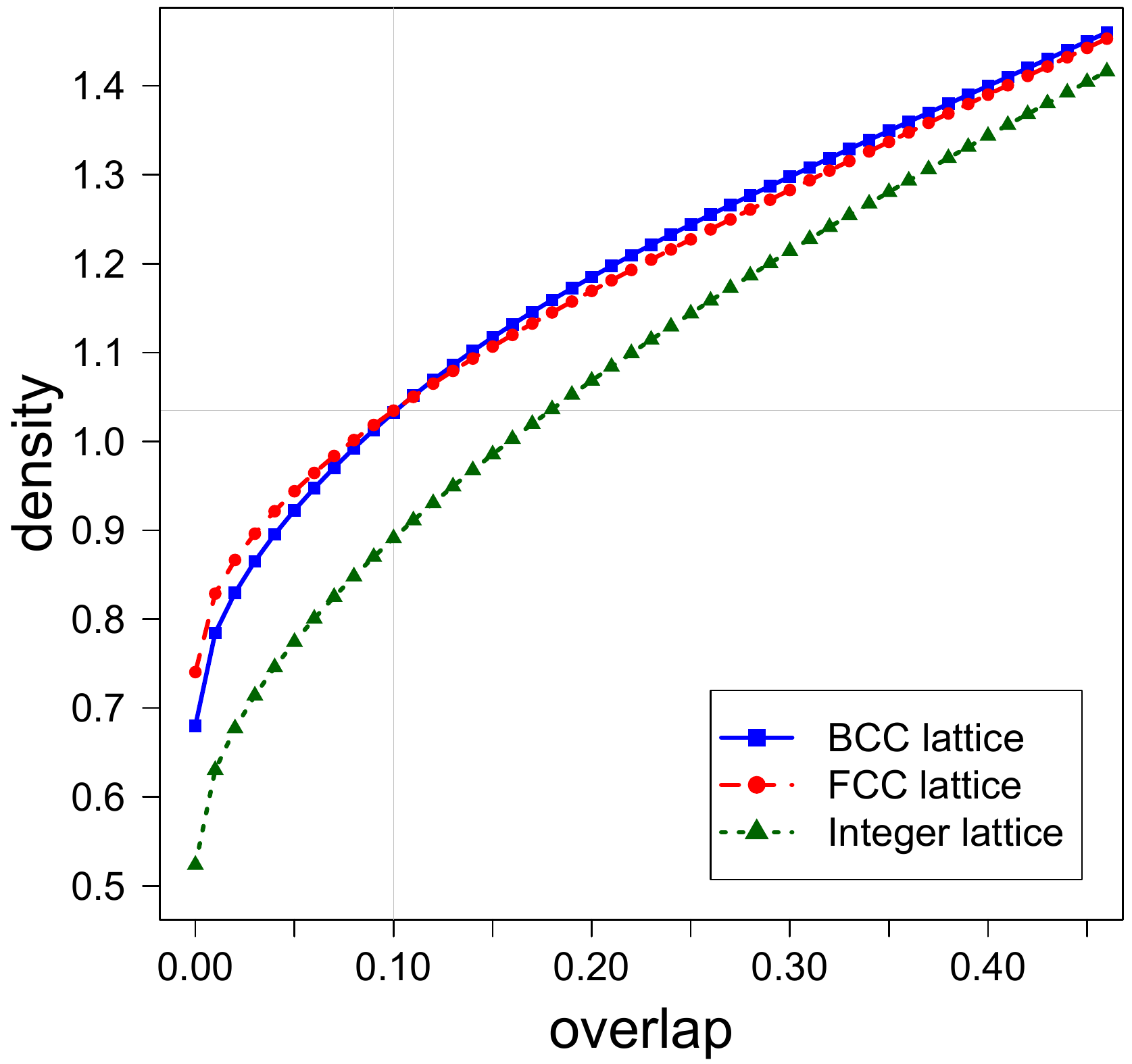}
\caption{Relaxed packing quality. Surface plot of $\packingquality(\delta, \omega)$ (left), slices through the surface $\packingquality$ for $\omega = 0.05$, 0.1 and 0.3 (middle), and slices through the surface $\packingquality$ for the BCC lattice, the integer lattice and the FCC lattice (right).}
\label{fig:3d}
\end{figure}

Formulas for the intersection of one, two and three spherical caps have been described in \cite{edelsReport}. In combination with our analysis, 
they result in a branchwise-defined closed expression for $\voloverlap(\delta,r)$. We have computed these expressions using the computer algebra system \textsc{MAPLE}.\footnote{\url{http://www.maplesoft.com}} A 3-dimensional plot of the function $\packingquality(\delta, \omega)$ is shown in Figure~\ref{fig:3d} (left). In Figure~\ref{fig:3d} (middle and right) we highlight specific slices through the 3-dimensional plot to better explain the behavior. Figure~\ref{fig:3d} (middle) shows $\packingquality(\delta, \omega)$ for three different values of $\threshold$. We can observe that the FCC lattice ($\delta=2$) is indeed optimal for small values 
of allowed overlap $\threshold$. When $\threshold = 0.1$, the BCC ($\delta=0.5$) and the FCC lattice achieve approximately the same density, namely $\density = 1.03$. Interestingly, for larger values of $\threshold$ the BCC lattice
attains the maximal density and surpasses the FCC lattice. Also observe that both lattices always achieve a better relaxed packing quality than the integer lattice ($\delta = 1$). 
Looking at the density of the FCC and the BCC lattice depending on $\threshold$ in Figure~\ref{fig:3d} (right), we can note that there is indeed only
one switch of optimality (at $\threshold \approx 0.1$).

Our analysis indicates that the $FCC$ and the $BCC$ lattice are always locally optimal configurations, and no other lattice from the family yields a better packing, independent of the allowed overlap. The natural next step would be to prove our observation. This problem can in theory be tackled with the same approach that we used in Section~\ref{sec_dim_2} in the $2D$ case by 
relating the derivative of $\packingquality(\delta,\omega)$ to the partial derivatives of $\voloverlap$ using the implicit function theorem. For small values of $\threshold$, we were able to verify the claim, that is, prove
monotonicity of the function in all branches with a substantial amount of symbolic computations. However, as soon as the expression for $\voloverlap$ involves intersections of $2$ and $3$ spherical caps, the derivatives
seem to become too complicated to be handled analytically. 

\section{Discussion}
\label{discussion}
This work has analyzed the problem of densest sphere packings while allowing some overlap among the spheres. We see our contributions as a first step towards an interesting and important research direction,
given the numerous applications of spheres with overlap in the natural sciences. For example, our analysis of the distance-based overlap measure showing that the FCC lattice is optimal independent of the amount of overlap, and hence independent of the scaling of the spheres, lays the theoretical foundations for \cite{ellipsoid_packing}, i.e.,~for analyzing the spatial organization of chromosomes in the cell nucleus as a sphere arrangement. A major restriction of our approach is our focus on a one-dimensional sub-lattice, the diagonally distorted
lattices. Can we hope for an analysis of more general lattice families? This question should probably first be considered
in $2D$, given the extremely involved proof of optimality already for the classical packing problem in $3D$.

\end{document}